\newcommand{\keywords}[1]{\par\addvspace\baselineskip
\noindent\keywordname\enspace\ignorespaces#1}
\def\P{{\mathchoice {\hbox{$\sf\textstyle P\kern-0.4em Z$}}
{\hbox{$\sf\textstyle P\kern-0.4em P$}}
{\hbox{$\sf\scriptstyle P\kern-0.3em P$}}
{\hbox{$\sf\scriptscriptstyle P\kern-0.2em P$}}}}
\def\cqfd{\par\nopagebreak\rightline{\vrule height 3pt width 5pt depth 2pt}
\medbreak}
\let\Rightarrow=\Rightarrow
\def\rem#1\par{\par\noindent\begin{rema} \nopagebreak \strut \rm #1 \end{rema}}
\def\ex#1\par{\par\noindent\begin{exemple} \nopagebreak \strut \rm #1 \end{exemple}}
\def\pb#1\par{\par\noindent\begin{pb} \nopagebreak \strut #1 \end{pb}}
\def\skpr{\mbox{\it Sketch proof. \/}}
\def\N{{\mathchoice {\hbox{$\sf\textstyle N\kern-0.4em N$}}
{\hbox{$\sf\textstyle I\kern-0.42em N$}}
{\hbox{$\sf\scriptstyle I\kern-0.2em N$}} 
{\hbox{$\sf\scriptscriptstyle I\kern-0.em N$}}}}
\def\Z{{\mathchoice {\hbox{$\sf\textstyle Z\kern-0.4em Z$}}
{\hbox{$\sf\textstyle Z\kern-0.4em Z$}}
{\hbox{$\sf\scriptstyle Z\kern-0.3em Z$}}
{\hbox{$\sf\scriptscriptstyle Z\kern-0.2em Z$}}}}
\author{Jean N\'eraud \and Carla Selmi}
\institute{Universit\'e de Rouen, Laboratoire d'Informatique, de Traitement de l'Information et des Syst\`emes, 
 Avenue de l'Universit\'e, 76800 Saint-\'Etienne-du-Rouvray, France\\
\email{neraud.jean@free.fr,carla.selmi@univ-rouen.fr}  
}
\begin{document}
\title{Invariance: a Theoretical Approach for Coding Sets of Words Modulo Literal (Anti)Morphisms}
\titlerunning{A Theoretical  Approach for Coding Sets of Words Modulo Literal (Anti)Morphisms }
\toctitle{Invariance: a Theoretical Approach for Coding Sets of Words Modulo Literal (Anti)Morphisms}
\tocauthor{J.~N\'eraud and C.~Selmi}
\authorrunning{J. N\'eraud and C. Selmi}
\maketitle
\setcounter{footnote}{0}
\begin{abstract}
Let $A$ be a finite or countable alphabet and let $\theta$ be  literal (anti)morphism onto $A^*$ (by definition, such a correspondence is determinated by a permutation of the alphabet). This paper deals with sets which are invariant under $\theta$ ($\theta$-invariant for short).
We establish an extension of the famous defect theorem. Moreover, we prove that for the 
so-called thin $\theta$-invariant codes, maximality and completeness are two equivalent notions. 
We prove that a similar property holds for some special families of $\theta$-invariant codes such as prefix (bifix) codes, codes with a finite (two-way) deciphering delay, uniformly synchronous codes and circular codes. For a  special class of involutive antimorphisms, we prove that any regular $\theta$-invariant code may be embedded into a complete one.
\keywords{
antimorphism, bifix,  circular, code, complete, deciphering delay,  defect, delay, embedding, equation, literal,  maximal,  morphism, prefix, synchronizing delay,  variable length code,  verbal synchronizing delay, word.}
\end{abstract}

\section{Introduction}
During the last decade, in the free monoid theory, due to their powerful applications, in particular in DNA-computing, one-to-one {\it morphic} or {\it antimorphic} correspondences play a particularly important part.
Given a finite or countable {\it alphabet}, say $A$, any  such  mapping  is  a substitution which is  fully determined by extending a unique  permutation of $A$, to a mapping onto $A^*$  (the {\it free monoid} that is generated by $A$).
The resulting mapping is commonly referred to as  {\it literal} (or {\it  letter-to-letter}) moreover, in the case of a finite alphabet, 
it is well known that,  with respect to the composition, some power of such a correspondence is the identity
(classically, in the case where this power corresponds to the square, we say that the correspondence is {\it involutive}).

 In that special case of involutive morphisms or antimorphisms -we write (anti)morphisms for short,  lots of successful investigations have been done 
for extending the now classical  combinatorical properties on  words:
we mention the study of the so-called pseudo-palindromes \cite{BdZD11,dD06}, or that of pseudo-repetitions \cite{ARVS14,GMRMNT13,KM08}.
The framework of  some peculiar families of codes \cite{KM06} and equations in words \cite{CCKS11,MMNS14} have been also concerned.
Moreover, in the larger family  of  one-to-one (anti)morphisms, a nice generalization of 
the famous theorem of Fine and Wilf \cite[Proposition 1.3.5]{Lo83} has been recently established in \cite{MMN11}.

Equations in words are also the starting point of the study in the present paper, where we adopt the point of view from \cite[Chap. 9]{Lo83}.
Let $A$ be a finite or countable alphabet; a one-to-one literal  (anti)morphism onto $A^*$, namely $\theta$,  being fixed, consider a finite collection of unknown words, say $Z$. 
In view of making the present foreword more readable,  in the first instance we take $\theta$  as an involutive literal substitution (that is $\theta^2=id_{A^*}$).
We assign that the words in $Z$ and their images by $\theta$ to satisfy a given equation, and we are interested in the cardinality of any set $T$, whose elements allow by concatenation to compute all the words in $Z$.
Actually, such a question  might be more complex than in the classical configuration, where $\theta$ does not interfer:
it is well known that in that classical case, according to the famous defect theorem \cite[Theorem 1.2.5]{Lo83}, 
the words in $Z$ may be computed as the concatenation of at most $|Z|-1$ words that  don't satisfy any non-trivial equation.
With the terminology of \cite{Lo83,BPR10}, $T$, the set of such words is a {\it code}, or equivalently  $T^*$, the submonoid that it generates, is {\it free}: more precisely, with respect to the inclusion of sets it is the smallest free submonoid of $A^*$ that contains $Z$.

Along the way, for solving our problem, applying the defect theorem  to  the set $X=Z\cup\theta(Z)$  might appear as natural.  Such a methodology garantees 
the existence of a code $T$, with $|T|\le |X|-1$, and such that $T^*$ is the smallest free submonoid of $A^*$ that contains $X$.
Unfortunately, since both the words in $Z$ and $\theta(Z)$ are expressed  as concatenations of 
words in $T$,  among the elements of 
$T\cup\theta(T)$ non-trivial equations can remain; in other words, by applying that methodology, 
the initial problem would be transferred among the words in $T\cup\theta(T)$. This situation is particularly illustrated by \cite[Proposition 3]{KM08}, where the authors prove that, given an involutive antimorphism $\theta$, the solutions of  the equation $xy=\theta(y)x$ are $x=(uv)^i u, y=vu$, where the elements $u,v$ of $T$ satisfy the non-trivial equation  $vu=\theta(u)\theta(v)$.

In the general case where $\theta$ is a literal one-to-one (anti)morphism, we note that
the union, say $Y$,  of the sets $\theta^i(T)$, for all $i\in \Z$,  is itself $\theta$-invariant, therefore
an alternative methodology will consist in asking for some code $Y$ which is invariant under $\theta$, and such that $Y^*$ is the smallest free submonoid of $A^*$ that contains $X=\bigcup_{i\in \Z} \theta^i(Z)$.
By the way, it is straightforward to prove that the intersection of an arbitrary family of  $\theta$-invariant free  submonoids is itself a
$\theta$-invariant free submonoid. In the present  paper we prove the following result:
\begin{flushleft}
{\bf Theorem 1.}
{\it Let $A$ be a finite or countable alphabet, let  $\theta$ be a literal (anti)morphism onto $A^*$, and let $X$ be a  finite $\theta$-invariant set.
If $X$  it  is not a code, then the smallest  $\theta$-invariant free submonoid of $A^*$ that contains $X$ 
is generated by a $\theta$-invariant code $Y$ which satisfies  $|Y|\le |X|-1$.
}
\end{flushleft}
For illustrating this result in term of equations, we refer to \cite{CCKS11,MMNS14}, where
the authors  considered generalizations of the famous equation in three unknowns of Lyndon-Sh\"utzenberger  \cite[Sect. 9.2]{Lo83}. 
They proved that, an involutive (anti)morphism $\theta$ being fixed, given such an  equation with sufficiently long members, 
a word $t$ exists such that any 3-uple of ``solutions" can be expressed as a concatenation of words in $\{t\}\cup \{\theta(t)\}$.
With the notation of Theorem 1, the elements of the $\theta$-invariant set $X$ are $x,y,z,\theta(x),\theta(y),\theta(z)$ 
and those of $Y$ are $t$ and $\theta(t)$: we verify that $Y$ is a $\theta$-invariant code with $|Y|\le |X|-1$.

In the sequel, we will  continue our investigation by studying the properties of  complete $\theta$-invariant codes:
a subset $X$ of $A^*$ is {\it complete} if  any word of $A^*$ is a factor of some words in $X^*$. 
From this point of view, a famous result from Sch\"utzenberger states that, for the wide family of the so-called {\it thin} codes 
(which contains regular codes) \cite[Sect. 2.5]{BPR10}, maximality and completeness are two equivalent notions.
In the framework of invariant codes, we prove the following result:
\begin{flushleft}
{\bf Theorem 2.}
{\it Let $A$ be a finite or countable alphabet. Given a thin $\theta$-invariant code $X\subseteq A^*$,  the three following conditions are equivalent:\\
(i) $X$ is complete\\
(ii) $X$ is a maximal code\\
(iii) $X$ is maximal in the family of the $\theta$-invariant codes.
}
\end{flushleft}
In the proof,  the main feature consists in establishing that a non-complete $\theta$-invariant code $X$ 
cannot be maximal in the family of $\theta$-invariant codes: 
actually, the most delicate step lays upon the construction of a convenient $\theta$-invariant set 
$Z\subseteq A^*$, with $X\cap Z=\emptyset$ and 
such that $X\cup Z$ remains itself a $\theta$-invariant code.

It is well known that the preceding result from Sch\"utzenberger  has been successfully  extended to some famous families of thin codes, 
such as {\it prefix} ({\it bifix, uniformly synchronous, circular}) codes 
(cf \cite[Proposition 3.3.8]{BPR10}, \cite[Proposition 6.2.1]{BPR10}, \cite[Theorem 10.2.11]{BPR10},  \cite[Proposition 3.6]{Br98} and  \cite[Theorem 3.5]{N08}) and codes with a {\it finite deciphering delay} (f.d.d. codes, for short)  \cite[Theorem 5.2.2]{BPR10}.
From this point of view, we will examine the behavior of corresponding  families of $\theta$-invariant codes.
Actually we establish a result similar to the preceding theorem 2 in the framework of the family of prefix
(bifix, f.d.d., two-way f.d.d, uniformly synchronized, circular codes).
In the proof,  a construction very similar to the previous one may be  used in the case of prefix, bifix, f.d.d., two-way f.d.d codes. 
At the contrary, investigating the behavior of circular codes  with regards to the question necessitates the computation of a more sofisticated set; moreover
the family of uniformly synchronized codes itself impose to make use of a significantly different methodology.

In the last part of our study, we address  to the problem of embedding a non-complete $\theta$-invariant code into  a complete one. 
For the first time, this question was  stated  in \cite{R75}, where the author asked whether any finite code can be imbedded into a regular one. 
A positive answer was provided in \cite{ER85}, where was established a formula for embedding any regular code into a complete one.
From the point of view of $\theta$-invariant codes, we obtain a positive answer only in the case 
where $\theta$ is an involutive antimorphism which is different of the so-called miror image;
actually the general question remains open.

We now describe the contents of the paper. 
Section 2 contains the preliminaries: the terminology of the free monoid is settled, 
and the definitions of some classical families of codes are recalled. Theorem 1 is established in Section 3, 
where an original example of equation  is studied.
The proof of Theorem 2 is done in Section 3, and extensions for special familes of $\theta$-invariant codes are studied in Section 4.
The question of embedding a regular $\theta$-invariant code into a complete one is examined in Section 5.
\section{Preliminaries}
We adopt the notation of the free monoid theory: given an alphabet $A$, we denote by $A^*$ the free monoid that it generates. 
Given a word $w$, we denote by $|w|$ its length, the empty word, that we denote by $\varepsilon$, being the word with length $0$. We denote by $w_i$ the letter of position $i$ in $w$: with this notation we have $w=w_1\cdots w_{|w|}$.
We set $A^+=A^*\setminus \{\varepsilon\}$. 
Given $x\in A^*$ and $w\in A^+$, we say that $x$ is a {\it prefix} ({\it suffix}) of $w$ if a word $u$ exists such that $w=xu$ ($w=ux$).
Similarly, $x$ is a {\it factor} of $w$ if a pair of words $u,v$ exists such that $w=uxv$. 
Given a non-empty set $X\subseteq A^*$, we denote by $P(X)$ ($S(X), F(X)$) the set of the words that are prefix (suffix, factor) of some word in $X$. Clearly, we have $X\subseteq P(X)$  ($S(X), F(X)$). A set $X\subseteq A^*$ is {\it complete} iff $F(X^*)=A^*$.
Given a pair of words $w,w'$, we say that it {\it overlaps} if  words $u,v$ exist such that $uw'=wv$ or $w'u=vw$,
with $1\le |u| < |w|$ and  $1\le |v| < |w'|$;  
otherwise, the pair is {\it overlapping-free} (in such a case, if $w=w'$, we simply say that $w$ is overlapping-free).

It is assumed that the reader has a fundamental understanding  with the main concepts of the theory of variable length codes: we only recall some of the main definitions and we
suggest, if necessary,  that he (she) report to  \cite{BPR10}.
A set $X$ is a {\it variable length code} (a {\it code} for short) iff any equation among the words of $X$ is trivial, 
that is, for any pair of sequences of words in $X$, namely  $(x_i)_{1\le i\le m}$, $(y_j)_{1\le i\le n}$, the equation
$x_1\cdots x_m=y_1\cdots y_n$ implies  $m=n$ and $x_i=y_i$ for each integer $i\in [1,m]$.
By definition $X^*$, the submonoid of $A^*$ which is generated by $X$, is {\it free}. Equivalently, $X^*$ satisfies the property
of {\it{equidivisibility}}, that is 
$(X^*)^{-1}X^*\cap X^* (X^*)^{-1}=X^*$.

Some famous families of codes that have been studied in the literature:
$X$ is a {\it prefix } ({\it suffix, bifix}) {\it code} iff $X\neq\{\varepsilon\}$ and $X\cap XA^+=\emptyset$ $(X\cap A^+X=\emptyset,
X\cap XA^+=X\cap A^+X=\emptyset$).
$X$ is a code with a {\it  finite deciphering delay} ({\it f.d.d. code} for short) if it is a code
and if a non-negative integer $d$ exists such that $X^{-1}X^*\cap X^dA^+\subseteq X^+$. 
With this condition, if another integer $d'$ exists  such that we have $X^*X^{-1}\cap A^+X^{d'}\subseteq X^+$,
we say that $X$ is a {\it two-way f.d.d. code}.
$X$ is a {\it uniformly synchronized code} if it is a code and if a positive integer $k$ exists such that, 
for all $x,y\in X^k, u,v\in A^+$:
$uxyv\in X^*\Longrightarrow ux,xv\in X^*.$
 $X$ is a {\it circular code} if  for any pair of sequences of words in $X$, 
namely $(x_i)_{1\le i\le m},(y_j)_{1\le j\le n}$, and any pair of words $s,p$, with $s\neq\varepsilon$, the equation
$x_1\cdots x_m=sy_2\cdots y_np$, with $y_1=ps$, implies $m=n$, $p=\varepsilon$ and $x_i=y_i$ for each $i\in [1,m]$.

In the whole paper, we consider a {\it finite} or {\it countable} alphabet $A$ and a mapping $\theta$ which satisfies each of  the three following conditions:

(a) $\theta$ is  a one-to-one correspondence onto $A^*$

(b) $\theta$ is {\it literal}, that is $\theta(A)\subseteq A$

(c) either $\theta$ is a {\it morphism} or it is an {\it antimorphism} (it is an antimophism if $\theta(\varepsilon)=\varepsilon$ 
and  $\theta(xy)=\theta(y)\theta(x)$, for any pair of words $x,y$); for short in any case we write that $\theta$ is an {\it (anti)morphism}.

In the case where $A$ is a finite set, it is well known that a positive integer $n$ exists such that $\theta^n=id_{A^*}$.
In the whole paper, we are interested in the family of sets $X\subseteq A^*$ that are invariant under 
the mapping $\theta$ ({\it $\theta$-invariant} for short), that is $\theta(X)=X$. 
\section{A Defect Effect for Invariant Sets}
\label{Max}
Informally, the famous defect theorem says that if some words of a set $X$ satisfy a non-trivial equation, then these words may be written 
upon an alphabet of smaller size. 
In this section, we examine whether a corresponding result may be stated in the frameword of $\theta$-invariant sets.
The following property comes from the definition:
\begin{proposition}
\label{base}
Let $M$ be a submonoid of $A^*$ and let $S\subseteq A^*$ be such that $M=S^*$. 
Then $M$ is $\theta$-invariant if and only if $S$ is $\theta$-invariant.
\end{proposition}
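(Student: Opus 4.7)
The plan is to interpret $S$ as the base of $M$ — the unique minimal generating subset — as the label suggests. The base admits the intrinsic set-theoretic description $S = (M\setminus\{\varepsilon\})\setminus (M\setminus\{\varepsilon\})^2$, and the proposition will follow once we show that $\theta$ stabilizes each piece of this description whenever it stabilizes $M$.

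A preliminary observation is that $\theta$ is length-preserving. Since $\theta$ is literal, $\theta(A)\subseteq A$, so $\theta$ sends each letter to a letter; being an (anti)morphism, it then preserves the length of every word. Using the (anti)morphism property one also checks the set equality $\theta(S^*) = \theta(S)^*$ for every $S \subseteq A^*$: the morphism case is immediate, and in the antimorphism case a factorization $s_1 \cdots s_n$ is sent to $\theta(s_n)\cdots \theta(s_1)$, which still lies in $\theta(S)^*$, and conversely by bijectivity.

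The reverse implication then follows in a single line: if $\theta(S) = S$, then $\theta(M) = \theta(S^*) = \theta(S)^* = S^* = M$. For the forward implication, assume $\theta(M)=M$. By length-preservation $\theta$ also stabilizes $M \setminus \{\varepsilon\}$, and by the (anti)morphism property the image of a product $m_1 m_2$ of two nonempty elements of $M$ is again such a product (with factors possibly reordered), so $\theta\bigl((M \setminus \{\varepsilon\})^2\bigr) = (M \setminus \{\varepsilon\})^2$. Taking the set difference with $M\setminus\{\varepsilon\}$ yields $\theta(S) = S$, as required.

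The only mildly delicate point is the antimorphism case, where products of factors get reversed. This is why I route the argument through the intrinsic characterization of the base: since we manipulate only sets of products, not ordered factorizations, swapping the order of factors within a product does not affect any of the set equalities above. Beyond that, everything is bookkeeping on length and on the (anti)morphism identity.
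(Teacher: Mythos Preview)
Your argument is correct, and your reading of $S$ as the base (minimal generating set) is both justified by the label and necessary: as literally phrased---``let $S\subseteq A^*$ be such that $M=S^*$''---the forward implication fails (e.g.\ with $A=\{a,b\}$, $\theta$ the morphism swapping $a,b$, $M=A^*$, $S=\{a,b,ab\}$). The paper itself gives no proof, merely recording that the property ``comes from the definition''; your route through the intrinsic description $S=(M\setminus\{\varepsilon\})\setminus(M\setminus\{\varepsilon\})^2$ is exactly the natural way to cash this out, and it is the form in which the proposition is actually invoked later (for the minimal generating set $X'$ in the uniformly-synchronized case).
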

Clearly the intersection of a non-empty family of $\theta$-invariant free submonoids of $A^*$ is itself a $\theta$-invariant free  submonoid. Given a submonoid $M$ of $A^*$, recall that its {\it minimal generating set} is $(M\setminus\{\varepsilon\})\setminus (M\setminus\{\varepsilon\})^2$.
\begin{theorem}
\label{defect}
Let $A$ be a finite or countable alphabet, let $X\subseteq A^*$ be a $\theta$-invariant set and let $Y$ be the minimal  generating set of the smallest $\theta$-invariant free submonoid of $A^*$ 
which contains $X$. If $X$ is not a code, then we have $|Y|\le |X|-1$.
\end{theorem}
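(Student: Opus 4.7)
The plan is to reduce Theorem \ref{defect} to the classical defect theorem, by observing that the smallest free submonoid of $A^*$ containing $X$ is already $\theta$-invariant; this $\theta$-invariance then transfers to its minimal base via Proposition \ref{base}. The decisive ingredients are that a bijective literal (anti)morphism carries a free submonoid to a free submonoid, and that arbitrary intersections of free submonoids remain free, so that a unique smallest free submonoid $N$ containing any given subset of $A^*$ exists.

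First I would let $N$ denote this smallest free submonoid and $T$ its minimal generating set, so $N=T^*$. Since $X$ is not a code, the classical defect theorem gives $|T|\le|X|-1$ immediately. The remaining task is to identify $N$ with the object $M$ whose minimal base is $Y$. For the antimorphism case one checks that $\theta(T^*)=\theta(T)^*$ as sets (via the identity $\theta(t_1\cdots t_k)=\theta(t_k)\cdots\theta(t_1)$) and that $\theta(T)$ is itself a code: any non-trivial relation among elements of $\theta(T)$ would, after applying the (anti)morphism $\theta^{-1}$, produce a non-trivial relation among elements of $T$, contradicting that $T$ is a code. The morphism case is analogous and easier. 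In either case $\theta(N)$ is a free submonoid of $A^*$.

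Since $\theta(N)$ contains $\theta(X)=X$, the minimality of $N$ forces $N\subseteq\theta(N)$. Applying the very same argument to $\theta^{-1}$ (which is itself a literal (anti)morphism, and for which $\theta^{-1}(X)=X$) yields $N\subseteq\theta^{-1}(N)$, equivalently $\theta(N)\subseteq N$. Hence $\theta(N)=N$, and by Proposition \ref{base} the base $T$ is $\theta$-invariant. Consequently $N$ is a $\theta$-invariant free submonoid containing $X$, so the minimality of $M$ gives $M\subseteq N$; conversely $M$ is a free submonoid containing $X$, so $N\subseteq M$ by minimality of $N$. Therefore $M=N$, its minimal base $Y$ coincides with $T$, and $|Y|\le|X|-1$.

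The only mildly delicate step is the verification that $\theta$ preserves freeness in the antimorphism case, since the reversal of concatenation has to be tracked when transporting equations and factorisations between $T$ and $\theta(T)$. Once that point is handled, the rest of the argument is a clean two-sided minimality comparison and requires no additional combinatorics on words beyond the classical defect theorem itself.
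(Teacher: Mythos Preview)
Your argument is correct and genuinely different from the paper's. The paper proceeds by adapting the classical proof of the defect theorem to the $\theta$-invariant setting: it reproves the key lemma that every element of $Y$ occurs as the initial factor of some $x\in X$, where the standard construction $Y_0=(Y\setminus\{y\})\{y\}^*$ is replaced by the intersection $M=\bigcap_{i\in\Z}\theta^i(Y_0)^*$ so as to force $\theta$-invariance of the contradicting submonoid. Your route instead shows that the ordinary free hull $N$ of $X$ is already $\theta$-invariant (because $\theta$ and $\theta^{-1}$ permute the free submonoids containing $X$), whence $N$ coincides with the $\theta$-invariant free hull $M$ and the classical defect theorem applies verbatim. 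Your approach is shorter and more conceptual, and it explains why the paper's extra $\theta$-bookkeeping in Lemma~\ref{initial} was not strictly necessary; the paper's approach, on the other hand, is self-contained and makes the structure of the initial-factor map explicit without invoking the classical theorem as a black box. One minor remark: your appeal to Proposition~\ref{base} to deduce that $T$ is $\theta$-invariant is not actually needed for the cardinality conclusion (once $N=M$ you are done), and as literally stated that proposition only goes smoothly in the direction you use when $S$ is the \emph{minimal} generating set---which it is here, so no harm done.
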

\begin{proof}
With the notation of Theorem \ref{defect}, since $Y$ is  a code, each word $x\in X$ has a unique factorization upon the words of $Y$, 
namely $x=y_1\cdots y_n$, with $y_i\in Y$ ($1\le i\le n$). 
In a classical way, we say that $y_1$ ($y_n$)  is the {\it initial} ({\it terminal}) factor of $x$ (with respect to such a factorization). 
At first, we shall establish the following lemma:
\begin{lemma}
\label{initial}
With the preceding notation, each word in $Y$ is the initial (terminal) factor of a word in $X$.
\end{lemma}
\begin{proof}
By contradiction, assume that a word $y\in Y$  that is never initial of any word in $X$ exists. 
Set $Y_0=(Y\setminus \{y\})\{y\}^*$ and  $Y_i=\theta^i(Y_0)$, for each integer $i\in\Z$. 
In a classical way (cf e.g. \cite[p. 7]{Lo83}), since $Y$ is a code, $Y_0$ itself is a code.
Since $\theta^i$ is a one-to-one correspondence, for each integer $i\in\Z$, $Y_i$ is a code, that is $Y_i^*$ is a free submonoid of $A^*$. 
Consequently, the intersection, namely $M$, of the family $(Y_i^*)_{i\in\Z}$ is itself a free submonoid of $A^*$. 
Moreover we have  $\theta(M)\subseteq M$ (indeed,  given a word $w\in M$, $\theta(w)\not\in Y_i$ implies $w\not\in Y_{i-1}$)  therefore, since $\theta$ is onto, we obtain $\theta(M)=M$. 
Let $x$ be  an arbitrary word in  $X$. Since $X\subseteq Y^*$,  and according to  the definition of $y$, 
we have $x=(y_1y^{k_1})(y_2y^{k_2})\cdots (y_ny^{k_n})$, with $y_1,\cdots y_n\in Y\setminus\{ y\}$ and $k_1,\cdots k_n\ge 0$. 
Consequently $x$ belongs to $Y_0^*$, therefore we have $X\subseteq Y_0^*$. 
Since  $X$ is $\theta$-invariant, this implies $X=\theta(X)\subseteq Y_i^*$ for each $i\in\Z$, thus $X\subseteq M$.\\
But the word $y$ belongs to $Y^*$ and doesn't belong to $Y_0^*$  thus it doesn't belong to $M$.  
This implies $X\subseteq M\subsetneq Y^*$: a contradiction with the minimality of $Y^*$.\cqfd
\end{proof}
\begin{flushleft}
 {\it Proof of Theorem \ref{defect}}. Let $\alpha$ be the mapping from  $X$ onto $Y$ which, with every word $x\in X$, 
 associates  the initial factor of 
$x$ in its (unique) factorization over $Y^*$. According to Lemma \ref{initial}, $\alpha$ is onto.
We will prove that it is not one-to-one. Classically,  since $X$ is not a code, a non-trivial equation may be written among its words, say:
$x_1\cdots x_n=x'_1\cdots x'_m,~~ {\rm with}~~x_i,x'_j\in X~~x_1\neq x'_1~~(1\le i\le n, 1\le j \le m).$
Since $Y$ is a code, a unique sequence of words in $Y$, namely $y_1,\cdots, y_p$ exists such that:
$x_1\cdots x_n=x'_1\cdots x'_m=y_1\cdots y_p.$
 This implies $y_1=\alpha(x_1)=\alpha(x'_1)$ and completes the proof.
\cqfd
\end{flushleft}
\end{proof}
In what follows we discuss some  interpretation of Theorem \ref{defect} with regards to equations in words.
For this purpose, we assume that $A$ is finite, thus a positive integer $n$ exists such that $\theta^n=id_{A^*}$. 
Consider  a finite set of words, say $Z$, and denote by $X$ the union of the sets $\theta^i(Z)$, for $i\in [1,n]$; assume that a non-trivial equation holds among the words of $X$, namely
$x_1\cdots x_m=y_1\cdots y_p$.
By construction $X$ is $\theta$-invariant therefore, 
according to Theorem \ref{defect}, a $\theta$-invariant code $Y$ exists such that $X\subseteq Y^*$, with  $|Y|\le |X|-1$. 
This means that each of the words in $X$ can be expressed by making use of at most $|X|-1$ words of type $\theta^i(u)$, 
with $u\in Y$ and $1\le i\le n$.
It will be easily verified that the examples from \cite{CCKS11,KM08,MMNS14} corroborate this fact, moreover below  
we mention an original  one: 
\begin{example}
\label{ex1}
Let  $\theta$ be a literal antimorphism such that $\theta^3=id_{A^*}$.
Consider two different  words $x,y$, with $|x|>|y|$, which satisfy the following equation:
$$ x\theta(y)=\theta^2(y)\theta(x).$$
With these conditions, a pair of words $u,v$ exists such that $x=uv$, $\theta^2(y)=u$,
thus $y=\theta(u)$,
moreover we have  $v=\theta(v)$ and $u=\theta(u)=\theta^2(u)$.
With the preceding notation, we have $Z=\{x,y\}$, $X=Z\cup\theta(Z)\cup\theta^2(Z)$,
$Y=\{u\}\cup\{v\}\cup\{\theta(u)\}\cup\{\theta(v)\}\cup\{\theta^2(u)\}\cup\{\theta^2(v)\}$.
It follows from $y=\theta(y)=\theta^2(y)$ that $X=\{x\}\cup\{\theta(x)\}\cup\{\theta^2(x)\}\cup \{y\}$.\\
- At first, assume that no word $t$ exists such that $u,v\in t^+$. In a classical way,
we have $uv\neq vu$, thus $X=\{x,\theta(x),\theta^2(x),y\}$ and
$Y=\{u,v\}$.
We verify that  $|Y|\le |X|-1$.\\
- Now, assume that we have $u,v\in t^+$. We obtain $X=Z=\{x,y\}$ and $Y=\{t\}$. Once more we have   $|Y|\le |X|-1$.
\end{example}
\section{Maximal $\theta$-Invariant Codes}
\label{Maxcompl}
Given  set $X\subseteq A^*$, we say that it  is {\it thin} if $A^*\neq  F(X)$. 
Regular codes  are well known examples  of thin codes.  From the point of view of maximal codes, below we recall  one of the  famous result  stated by Sch\"utzenberger:
\begin{theorem}{\bf \cite[Theorem 2.5.16]{BPR10}}
\label{classic}
Let $X$ be an thin code.
Then the following conditions are equivalent:\\
(i) $X$ is complete\\
(ii) $X$ is a maximal code.
\end{theorem}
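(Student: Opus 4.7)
The plan is to split the equivalence into its two implications, establishing $(i)\Rightarrow (ii)$ by an averaging (Bernoulli) argument and $(ii)\Rightarrow (i)$ by a direct combinatorial construction that exploits the thinness hypothesis. In both cases I proceed by contradiction, extending $X$ strictly and deriving either an arithmetic impossibility or an ambiguous factorization.

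For $(i)\Rightarrow (ii)$, I fix a Bernoulli distribution $\pi$ on $A$ assigning positive weight to every letter, extended multiplicatively to $A^*$. A Kraft--McMillan type inequality yields $\pi(Y)=\sum_{y\in Y}\pi(y)\le 1$ for every code $Y$, while a second classical fact is that for a thin code $X$, completeness forces $\pi(X)=1$. If $X$ were a proper subset of some code $Y$, then any $w\in Y\setminus X$ would give $\pi(Y)\ge\pi(X)+\pi(w)>1$, contradicting the first inequality. Hence no such $Y$ exists and $X$ is maximal.

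For $(ii)\Rightarrow (i)$, I argue contrapositively: assuming $X$ is a thin code that is not complete, I construct $w\in A^+$ with $w\notin X^*$ such that $X\cup\{w\}$ remains a code, contradicting maximality. Thinness supplies some $y\notin F(X)$; by replacing $y$ with a sufficiently high power I may also assume that $y$ is overlapping-free. Non-completeness supplies some $u\notin F(X^*)$, and I form $w$ by sandwiching $u$ between two copies of $y$, possibly padded by elements of $X$. Then $w\notin X^*$, since any $X$-factorization of $w$ would realise $u$ as a factor of $X^*$. To see that $X\cup\{w\}$ is a code, I consider a hypothetical minimal non-trivial equation in $(X\cup\{w\})^*$ and analyse the leftmost occurrences of $w$ on each side: the condition $y\notin F(X)$ prevents a $w$-block on one side from being covered purely by $X$-codewords on the other, and overlapping-freeness of $y$ pins the two occurrences to identical positions. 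Cancelling matched $w$'s reduces the equation to one over $X$, contradicting non-triviality.

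The main obstacle lies in the second implication, specifically in controlling how the distinguished word $y$ may appear as a factor of $X^*$ across codeword boundaries: although $y\notin F(X)$, there is no a priori constraint on $F(X^*)$. Overcoming this requires a delicate choice of the power of $y$, possibly with padding by $X$-codewords, together with the obstruction $u\notin F(X^*)$, which blocks any alignment that would embed an entire $w$ inside $X^*$. Once this alignment lemma is in place, the rest of the argument proceeds by a routine symbol chase.
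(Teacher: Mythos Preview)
The paper does not prove this theorem at all: it is quoted verbatim as Sch\"utzenberger's result \cite[Theorem~2.5.16]{BPR10} and used as a black box. So there is no ``paper's proof'' to compare against; I will simply assess your argument on its own terms.

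Your implication $(i)\Rightarrow(ii)$ is the standard Bernoulli-measure argument and is fine as stated (granting the two classical facts you invoke).

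The implication $(ii)\Rightarrow(i)$, however, contains a genuine error. You write that ``by replacing $y$ with a sufficiently high power I may also assume that $y$ is overlapping-free.'' This is false: powers are the \emph{opposite} of overlapping-free. For any $y$ with $|y|\ge 1$ and any $n\ge 2$, the word $y^n$ has $y^{n-1}$ as a proper border, so it is highly bordered, not unbordered. No amount of powering will ever produce an overlapping-free word. This undermines the whole alignment argument you sketch afterwards, since the ``pinning'' of occurrences of $y$ relies precisely on $y$ being unbordered.

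There is also a structural detour that is unnecessary and causes you trouble. You invoke thinness to obtain $y\notin F(X)$ and non-completeness separately to obtain $u\notin F(X^*)$, then try to combine them. But the contrapositive $\neg(i)\Rightarrow\neg(ii)$ does not need thinness at all: non-completeness alone already gives a word $u\notin F(X^*)$, and one then manufactures from $u$ an \emph{unbordered} word $z$ still outside $F(X^*)$ --- not by powering, but by a padding trick such as $z=\overline a^{\,|u|}\,u\,a^{|u|}$ with suitable letters $a\neq\overline a$ (compare the construction in~(\ref{Eq2}) of the present paper, or \cite[Proposition~1.3.6]{BPR10}). With $z$ unbordered and $z\notin F(X^*)$, the verification that $X\cup\{z\}$ is a code is a short case analysis; the obstacle you flag in your last paragraph (occurrences of the distinguished word across $X$-boundaries) simply disappears, because $z\notin F(X^*)$ is already the right hypothesis. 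Thinness is used only in the direction $(i)\Rightarrow(ii)$, via the measure equality $\pi(X)=1$.
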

The aim of this section is to examine whether a corresponding result may be stated in the family of thin $\theta$-invariant codes.

In the case where $|A|=1$, we have  $\theta=id_{A^*}$, moreover the codes are all the singletons in $A^+$.  
Therefore any code is  $\theta$-invariant,  maximal and complete. In the rest of the paper, we assume that $|A|\ge 2$.
\begin{flushleft} 
{\it Some notations}.
Let $X$ be a non-complete $\theta$-invariant code, and let $y\not\in F(X^*)$. Without loss of generality, 
we may assume that the initial and the terminal letters of $y$ are different (otherwise, substitute to $y$ the word $ay{\overline a}$, 
with $a,{\overline a}\in A$ and $a\neq {\overline a}$), we may also assume that $|y|\ge 2$. 
Set:
\end{flushleft} 
\begin{eqnarray}
\label{Eq2}
y=ax{\overline a},~~ z={\overline a^{|y|}}ya^{|y|}={\overline a^{|y|}}ax{\overline a}a^{|y|}.
\end{eqnarray}
Since $\theta$ is a literal (anti)morphism, for each integer $i\in \Z$, a pair of different letters  $b,{\overline b}$ 
and a word $x'$ exist such that $|x'|=|x|=|y|-2$, and:
\begin{eqnarray}
\label{Eq3}
\theta^i(z)={\overline b^{|y|}}\theta^i(y)b^{|y|}={\overline b^{|y|}}bx'{\overline b}b^{|y|}.
\end{eqnarray}
Given two (not necessarily different) integers $i,j\in \Z$, we will accurately study how the two words $\theta^i(z), \theta^j(z)$ may overlap.
\begin{lemma}
\label{overlapZ}
With the notation in (\ref{Eq3}), let $u,v\in A^+$ and $i,j\in \Z$ such that  $|u| \le |z|-1$ and $\theta^i(z)v=u\theta^j(z)$. 
Then we have $|u|=|v|\ge 2|y|$, moreover a letter $b$ and a unique positive integer $k$ (depending of $|u|$) exist such that we have
 $\theta^i(z)=ub^k$, $\theta^j(z)=b^kv$, with $k\le |y|$. 
\end{lemma}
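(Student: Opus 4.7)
The plan is to translate the overlap hypothesis into the identification of a common factor of both $\theta^i(z)$ and $\theta^j(z)$, and then to exploit the explicit shape (\ref{Eq3}) of these words. Since $|\theta^i(z)|=|\theta^j(z)|=|z|=3|y|$, comparing lengths in $\theta^i(z)v=u\theta^j(z)$ yields $|u|=|v|$. Setting $k=|z|-|u|\ge 1$ and writing $\theta^i(z)=u\,s$ (legitimate because $|u|\le|z|-1$), the equation reduces to $sv=\theta^j(z)$, so $s$ is simultaneously the suffix of $\theta^i(z)$ of length $k$ and the prefix of $\theta^j(z)$ of length $k$.

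Using (\ref{Eq3}), write $\theta^i(z)=\overline{b_i}^{|y|}\,b_i\,x'_i\,\overline{b_i}\,b_i^{|y|}$ and $\theta^j(z)=\overline{b_j}^{|y|}\,b_j\,x'_j\,\overline{b_j}\,b_j^{|y|}$, with $b_i\ne\overline{b_i}$ and $b_j\ne\overline{b_j}$. The crux is to rule out $k>|y|$. Under that hypothesis, the prefix of $\theta^j(z)$ of length $k$ begins with the run $\overline{b_j}^{|y|}$ of $|y|$ identical letters followed by the distinct letter $b_j$. I would match this against the suffix of $\theta^i(z)$ of length $k$, whose starting position is $3|y|-k\in[1,2|y|-1]$, through a case split: for $k\ge 2|y|+1$ the length-$|y|$ window at the start of the suffix crosses the boundary between the initial $\overline{b_i}^{|y|}$ block and the letter $b_i$ at position $|y|$, while for $|y|+1\le k\le 2|y|$ that window always contains both the isolated $\overline{b_i}$ at position $2|y|-1$ and at least one $b_i$ (either at position $|y|$ or in the terminal block). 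In each subcase the required monochromaticity forces $b_i=\overline{b_i}$, a contradiction.

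Once $k\le|y|$ is secured, a direct reading of (\ref{Eq3}) shows that the suffix of $\theta^i(z)$ of length $k$ sits entirely inside the terminal block $b_i^{|y|}$, hence equals $b_i^k$, while the prefix of $\theta^j(z)$ of length $k$ sits inside the initial block $\overline{b_j}^{|y|}$, hence equals $\overline{b_j}^k$. Their equality, together with $k\ge 1$, forces $b_i=\overline{b_j}$; calling this common letter $b$ yields $\theta^i(z)=u\,b^k$ and $\theta^j(z)=b^k\,v$ as claimed. Uniqueness of $k$ given $|u|$ is immediate from the relation $k=|z|-|u|$, and the bound $|u|=|v|=|z|-k\ge 2|y|$ follows.

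The only delicate point I anticipate lies in the case analysis above: since the middle word $x'_i$ is completely uncontrolled, the contradiction must be extracted from the fixed structural letters surrounding it in (\ref{Eq3})—namely the $b_i$ at position $|y|$, the $\overline{b_i}$ at position $2|y|-1$, and the final $b_i^{|y|}$ run. Verifying that these always provide a pair of distinct letters inside the relevant length-$|y|$ window whenever $k>|y|$ is the step that most deserves care; the remaining verifications are routine positional book-keeping.
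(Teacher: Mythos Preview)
Your argument is correct and follows essentially the same route as the paper: a case analysis on the amount of overlap, using only the structural letters in the decomposition (\ref{Eq3}) to produce a contradiction when the overlap is too short, and then reading off the result when the overlap falls entirely inside the terminal and initial monochromatic blocks. The only cosmetic difference is that the paper parametrizes by $|u|$ and checks two specific positions per case, whereas you parametrize by $k=|z|-|u|$ and argue that a length-$|y|$ window cannot be monochromatic; the content is the same.
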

\begin{proof}
According to (\ref{Eq3}), we set 
$\theta^i(z)={\overline b^{|y|}}bx'{\overline b}b^{|y|}$
and 
$\theta^j(z)={\overline c^{|y|}}cx''{\overline c}c^{|y|}$,
with $b, {\overline b}, c, {\overline c}\in A$ and $b \neq {\overline b}, c \neq {\overline c}$. 
Since $\theta$ is a literal (anti)morphism, we have $|\theta^i(z)|=|\theta^j(z)|$ thus $|u|=|v|$; 
since we have $1\le|u|\le 3|y|-1$, exactly one of the following cases occurs:\\
{\it Case 1:} $1\le |u|\le |y|-1$. With this condition, we have 
($\theta^i(z))_{|u|+1}={\overline b}={\overline c}=(u\theta^j(z))_{|u|+1}$ and 
$(\theta^i(z))_{|y|+1}=b={\overline c}=(u\theta^j(z))_{|y|+1}$,
which contradicts $b\neq {\overline b}$.\\
{\it Case 2:} $|u|=|y|$. This condition implies 
$(\theta^i(z))_{|u|+1}=b={\overline c}=(u\theta^j(z))_{|u|+1}$ and 
$(\theta^i(z))_{2|y|}={\overline b}={\overline c}=(u\theta^j(z))_{2|y|}$, 
which contradicts $b\neq {\overline b}$.\\
{\it Case 3:} $|y|+1\le |u|\le 2|y|-1$. We obtain 
$(\theta^i(z))_{2|y|}={\overline b}={\overline c}=(u\theta^j(z))_{2|y|}$ and
$(\theta^i(z))_{2|y|+1}=b={\overline c}=(u\theta^j(z))_{2|y|+1}$
which contradicts $b\neq {\overline b}$.\\
{\it Case 4:} $2|y|\le|u|\leq 3|y|-1$. With this condition, necessarily we have  
$b={\overline c}$, therefore an integer $k\in [1,|y|]$ exists such that $\theta^i(z)=ub^k$ and $\theta^j(z)=b^kv$.
\cqfd
\end{proof}
Set $Z=\{\theta^i(z)|i\in \Z\}$.
Since $y \notin F(X^*)$ and since $X$ is $\theta$-invariant, 
for any integer $i\in\Z$ we have $\theta^i(z)\not\in F(X^*)$,
hence we obtain $Z\cap F(X^*)=\emptyset$.
By construction,  all the words in $Z$ have  length $|z|$
moreover, as a consequence of Lemma \ref{overlapZ}:
\begin{lemma}
\label{facteurInterne}
With the preceding notation, we have $A^+ZA^+\cap ZX^*Z=\emptyset$.
\end{lemma}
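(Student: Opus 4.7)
The plan is to suppose, for contradiction, that $\alpha\,\theta^i(z)\,\beta = \theta^j(z)\,w\,\theta^k(z)$ holds with $\alpha,\beta\in A^+$, $w\in X^*$ and $i,j,k\in\Z$, and to show that the inner factor $\theta^i(y)$ of $\theta^i(z)$ would then appear as a factor of $w$, which is impossible. First I observe that, because $X$ is $\theta$-invariant, Proposition~\ref{base} yields that $X^*$ is $\theta$-invariant, and since $\theta$ is a literal (anti)morphism (so that $uwv\in X^*$ implies $\theta(w)$ is a factor of $\theta(uwv)\in X^*$), the set $F(X^*)$ is $\theta$-invariant as well. Combined with $y\notin F(X^*)$, this gives $\theta^i(y)\notin F(X^*)$ for every $i\in\Z$.

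Next I set $L=|\alpha|\ge 1$, so that in the right-hand side the occurrence of $\theta^i(z)$ covers positions $L+1,\ldots,L+|z|$, while $\theta^j(z)$, $w$ and $\theta^k(z)$ occupy $[1,|z|]$, $[|z|+1,|z|+|w|]$ and $[|z|+|w|+1,2|z|+|w|]$, respectively. Define $n_j=\max(0,|z|-L)$ and $n_k=\max(0,L-|w|)$: these are the overlap lengths of $\theta^i(z)$ with $\theta^j(z)$ and $\theta^k(z)$. When $n_j\ge 1$ one has $1\le L\le |z|-1$ and the overlap rewrites as an equation $\theta^j(z)\,v=u\,\theta^i(z)$ with $|u|=L\le|z|-1$, to which Lemma~\ref{overlapZ} (exchanging the roles of $i$ and $j$) applies, giving $n_j\le|y|$. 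Symmetrically, when $n_k\ge 1$, a rearrangement yields an equation of the form $\theta^i(z)\,v'=u'\,\theta^k(z)$ with $|u'|=|z|-n_k\le|z|-1$, and the same lemma produces $n_k\le|y|$. The cases $n_j=0$ or $n_k=0$ require no argument.

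To conclude, I locate the factor $\theta^i(y)$: it sits at positions $|y|+1,\ldots,2|y|$ inside $\theta^i(z)$, hence at positions $L+|y|+1,\ldots,L+2|y|$ of the right-hand side. The bound on $n_j$ gives $L\ge 2|y|$ (either $L\ge|z|=3|y|$ if $n_j=0$, or $L=|z|-n_j\ge|z|-|y|=2|y|$ otherwise), so $L+|y|+1\ge 3|y|+1=|z|+1$. Similarly, the bound on $n_k$ gives $L\le|w|+|y|$, so $L+2|y|\le|w|+3|y|=|z|+|w|$. Hence the interval $[L+|y|+1,L+2|y|]$ lies entirely inside $[|z|+1,|z|+|w|]$, the region occupied by $w$; consequently $\theta^i(y)\in F(w)\subseteq F(X^*)$, contradicting $\theta^i(y)\notin F(X^*)$. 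The main obstacle is the application of Lemma~\ref{overlapZ} to cap both boundary overlaps by $|y|$; once this is granted, the buffers $\overline{a}^{|y|}$ and $a^{|y|}$ appearing in the definition $z=\overline{a}^{|y|}\,y\,a^{|y|}$ mechanically push the critical middle factor $\theta^i(y)$ inside the region covered by $w$.
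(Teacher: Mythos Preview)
Your proof is correct and follows essentially the same approach as the paper's: both argue by contradiction, use Lemma~\ref{overlapZ} to bound how far $\theta^i(z)$ can overlap each of the outer factors $\theta^j(z)$, $\theta^k(z)$, and conclude that the central block $\theta^i(y)$ of $\theta^i(z)$ must fall inside the $X^*$-portion, contradicting $\theta^i(y)\notin F(X^*)$. The only difference is organizational: the paper splits into four cases according to whether $|u|<|z|$ and/or $|v|<|z|$, whereas you treat the two boundary overlaps uniformly via the quantities $n_j,n_k$ and a single position-arithmetic argument; this avoids the case distinction but uses exactly the same ingredients.
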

\begin{proof}
By contradiction, assume that $z_1,z_2,z_3\in Z$ , $x\in X^*$ and $u,v\in A^+$ exist such that $uz_1v=z_2xz_3$.
By comparing the lengths of the words $u$ and $v$ with $|z|$, exactly one of the three following cases occurs:\\
{\it Case 1:} $|z| \le |u|$ and $|z| \le|v|$. With this condition,  we have $z_2\in P(u)$ and $z_3\in S(v)$, therefore 
the word $z_1$ is a factor of $x$: this contradicts $Z\cap F(X^*)=\emptyset$.\\
{\it Case 2:} $|u|<|z|\le |v|$. We have in fact $u\in P(z_2)$ and $z_3\in S(v)$.
We are in the condition of Lemma  \ref{overlapZ}: the words $z_2$, $z_1$  overlap. 
Consequently, $u\in A^+$ and $b\in A$ exist such that $z_2=ub^k$ and $z_1=b^kz'_1$, with $1\le k\le |y|$. 
But by construction we have $|uz_1|=|z_2xz_3|-|v|$:
since we assume $|v|\ge |z|$,  this implies
 $|uz_1|\le|z_2xz_3|-|z|=|z_2x|$, therefore we  obtain $uz_1=ub^{k}z'_1\in P(z_2x)$. 
It follows from $z_2=ub^{k}$ that $z'_1\in P(x)$. 
Since $z_1\in Z$ and according to (\ref{Eq3}), $i\in\Z$ and ${\overline b}\in A$ exist such that we have $z_1=b^kz'_1=b^{|y|}\theta^i(y){\overline b}^{|y|}$.
Since by Lemma  \ref{overlapZ} we have $|z_1'|=|u| \geq 2|y|$, we obtain
$\theta^i(y) \in F(z'_1)$, which contradicts $y \notin F(X^*)$.\\
{\it Case 3:} $|v|<|z|\le |u|$. Same arguments on the reversed words lead to a conclusion similar to that of  Case 2.\\
{\it Case 4:} $|z|>|u|$ and $|z|>|v|$. With this condition, both the pairs of words $z_2,z_1$ and $z_1,z_3$ overlap. 
Once more we are in the condition of Lemma \ref{overlapZ}: letters $c,d$, 
words $u, v, s,t$, and integers $h,k$ exist such that the two following properties hold:
\begin{eqnarray}
z_2=uc^h,~~  z_1=c^{h}s,~~ |u|=|s| \ge 2|y|, ~~h \le |y|,
\\
z_1=td^{k},~~ z_3=d^kv, ~~ |v|=|t| \ge 2|y|, ~~k \le |y|.
\end{eqnarray}
It follows from $uz_1v=z_2xz_3$ that $uz_1v=(uc^h)x(d^kv)$, thus  $z_1=c^{h}xd^{k}$.
Once more according to (\ref{Eq3}), $i\in\Z$ and ${\overline c}\in A$ exist such that we have $z_1=c^{|y|}\theta^i(y){\overline c}^{|y|}$. 
Since we have $h,k\le |y|$, this implies $d={\overline c}$
moreover $\theta^i(y)$ is a factor of $x$. 
Once more, this contradicts $y \notin F(X^*)$.
\cqfd
\end{proof}
\begin{figure}
\label{f1}
\begin{center}
\includegraphics[width=9.5cm,height=3cm]{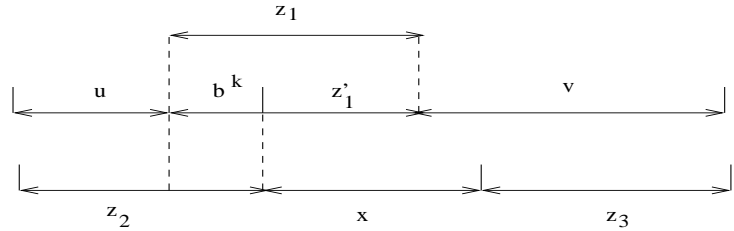}
\end{center}
\caption[]
{Proof of  Lemma \ref{facteurInterne}: Case 2}
\end{figure}
Thanks to Lemma  \ref{facteurInterne} we will prove some meaningful  results in Section \ref{Max-family}. Presently, 
we will apply it in a special context:
\begin{corollary}
\label{X-pref}
With the preceding notation, $X^*Z$ is a prefix code.
\end{corollary}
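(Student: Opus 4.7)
My plan is to verify the defining condition of a prefix code directly, namely $X^*Z \cap (X^*Z)A^+ = \emptyset$. I would suppose, for contradiction, that there exist $x_1 z_1, x_2 z_2 \in X^*Z$ (with $x_i \in X^*$, $z_i \in Z$) and $t \in A^+$ such that $x_2 z_2 = x_1 z_1 t$. Since $|z_1|=|z_2|=|z|$ and $|x_1 z_1| < |x_2 z_2|$, one gets $|x_1| < |x_2|$, so $x_1$ is a prefix of $x_2$, and we can write $x_2 = x_1 u$ with $u \in A^+$. The equation then reduces to $u z_2 = z_1 t$. If $|u| \geq |z|$, then $z_1$ is a prefix of $u$ and hence a factor of $x_2 \in X^*$, contradicting the fact (established just before Lemma \ref{facteurInterne}) that $Z \cap F(X^*) = \emptyset$. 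So it remains to handle the case $0 < |u| < |z|$.

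For this remaining case, I would apply Lemma \ref{overlapZ} to the equation $z_1 t = u z_2$, with $u$ and $t$ playing the roles of $u$ and $v$ in that lemma. This delivers a letter $b$ and an integer $k \in [1, |y|]$ such that $z_1 = u b^k$ and $z_2 = b^k t$, together with $|u| \geq 2|y|$. I would then exploit the fine structure of $z_1$ given by (\ref{Eq3}): writing $z_1 = \theta^i(z) = \overline{b}^{|y|} \theta^i(y) b^{|y|}$ and comparing with $z_1 = u b^k$, the inequality $k \leq |y|$ forces $u = \overline{b}^{|y|} \theta^i(y) b^{|y|-k}$; in particular $\theta^i(y)$ is a factor of $u$. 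But $u$ is a suffix of $x_2 \in X^*$, so $\theta^i(y) \in F(X^*)$. Since $X^*$ is $\theta$-invariant (Proposition \ref{base}) and the factor relation is preserved by the literal (anti)morphism $\theta$, one concludes $y \in F(X^*)$, contradicting the defining property of $y$.

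The main obstacle is recognising why the particular construction $z = \overline{a}^{|y|} y a^{|y|}$ is tailored to close the argument: thanks to Lemma \ref{overlapZ}, any genuine overlap length $|u|$ between two elements of $Z$ is automatically at least $2|y|$, which is precisely what is needed so that an entire copy of $\theta^i(y)$ fits inside $u$ and triggers the $\theta$-invariance contradiction. A shorter padding around $y$ in the definition of $z$ would not leave enough room for this copy, and the argument would collapse.
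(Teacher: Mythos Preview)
Your proof is correct. The reduction to $uz_2=z_1t$ with $u$ a nonempty suffix of $x_2\in X^*$ is sound, and the two cases ($|u|\ge |z|$ and $0<|u|<|z|$) are handled properly: in the second case Lemma~\ref{overlapZ} indeed gives $z_1=ub^k$ with $k\le|y|$ and $|u|\ge 2|y|$, so the explicit shape $z_1=\overline{b}^{\,|y|}\theta^i(y)b^{|y|}$ forces $\theta^i(y)$ to sit inside $u\in F(X^*)$, and $\theta$-invariance of $X^*$ (hence of $F(X^*)$) yields the contradiction $y\in F(X^*)$.

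The paper takes a shorter, different route: instead of going back to Lemma~\ref{overlapZ} and the fine structure of $z$, it simply left-multiplies the offending equality $x_1z_1u=x_2z_2$ by an arbitrary $z_3\in Z$, obtaining $(z_3x_1)\,z_1\,u=z_3x_2z_2\in ZX^*Z$ with $z_3x_1,u\in A^+$, which directly contradicts Lemma~\ref{facteurInterne} ($A^+ZA^+\cap ZX^*Z=\emptyset$). In effect, your argument re-derives the relevant special case of Lemma~\ref{facteurInterne} from scratch via Lemma~\ref{overlapZ}, while the paper uses Lemma~\ref{facteurInterne} as a black box. Your approach is more self-contained and makes transparent why the padding length $|y|$ in the definition of $z$ is exactly what is needed; the paper's approach is slicker and avoids repeating the case analysis already done in the proof of Lemma~\ref{facteurInterne}.
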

\begin{proof}Let $z_1,z_2\in Z$, $x_1,x_2\in X^*, u \in A^+$, such that $x_1z_1u= x_2z_2$. For any word $z_3\in Z$, 
we have  $(z_3x_1)z_1(u)=z_3x_2z_1$, a contradiction with Lemma \ref{facteurInterne}.\cqfd
\end{proof}
We are now ready to prove the main result of the section:
\begin{theorem}
\label{equivalences}
Let $A$ be a finite or countable alphabet and let $X\subseteq A^*$ be a thin $\theta$-invariant code.
Then the following conditions are equivalent:\\
(i) $X$ is complete\\
(ii) $X$ is a maximal code\\
(iii) $X$ is maximal in the family  $\theta$-invariant codes.
\end{theorem}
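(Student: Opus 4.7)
The equivalence (i)$\Leftrightarrow$(ii) is Sch\"utzenberger's Theorem~\ref{classic} applied to the thin code $X$, and (ii)$\Rightarrow$(iii) is trivial: a $\theta$-invariant supercode is a supercode a fortiori. The work thus lies in (iii)$\Rightarrow$(i), which I prove by contrapositive: assuming $X$ is not complete, I shall produce a $\theta$-invariant code strictly containing~$X$.

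Following the preceding construction, fix $y\notin F(X^*)$ with distinct first and last letters and $|y|\ge 2$, set $z=\overline{a}^{|y|}ya^{|y|}$, and let $Z=\{\theta^i(z)\mid i\in\Z\}$ as in~(\ref{Eq3}). Then $Z$ is $\theta$-invariant by construction, and $Z\cap F(X^*)=\emptyset$ (since $y\notin F(X^*)$ and $X$ is $\theta$-invariant) entails $Z\cap X=\emptyset$, so that $X\cup Z$ is a $\theta$-invariant set properly extending~$X$. The proof therefore reduces to the claim that $X\cup Z$ is a code. Given any two factorizations $u_1\cdots u_n=v_1\cdots v_m$ of a word $w$ over $X\cup Z$, denote by $P,P'$ the sets of starting positions in $w$ at which a $Z$-factor begins in each factorization; it suffices to prove $P=P'$, since matching $Z$-positions force equal $Z$-factors (all of length $|z|$) and the $X^*$-material between consecutive $Z$-positions is determined because $X$ is a code.

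Assume $P\neq P'$, set $p=\min(P\triangle P')$, and WLOG $p\in P\setminus P'$. Let $z_p\in Z$ occupy $[p,p+|z|-1]$ in factorization~1. Since $z_p\notin F(X^*)$, some $Z$-factor of factorization~2 must intersect this range; let $p'$ denote the earliest such starting position. Combined with $p\notin P'$ and the minimality of $p$, this yields $p'\in(p,p+|z|-1]$. If $P\cap P'\cap[0,p)=\emptyset$, neither factorization has a $Z$-factor starting in $[0,p)$, whence both $w[0,p+|z|)$ and $w[0,p'+|z|)$ belong to $X^*Z$ with the first a proper prefix of the second, contradicting Corollary~\ref{X-pref}. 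Otherwise, set $q=\max(P\cap P'\cap[0,p))$, which provides a common $Z$-factor $z_q$ at position $q$ in both factorizations; maximality of $q$ and minimality of $p$ jointly imply that $P'$ has no element in $(q,p')$, while non-overlap within factorization~1 forces $p\ge q+|z|$. Reading factorization~2 between positions $q$ and $p'+|z|$ then yields $z_q\beta z_{p'}$ with $\beta\in X^*$, and setting $u=w[q,p)$, $v=w[p+|z|,p'+|z|)$---both non-empty since $q<p<p'$---one obtains $uz_p v=z_q\beta z_{p'}$, exactly the configuration forbidden by Lemma~\ref{facteurInterne}.

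The principal obstacle is the synchronization of $Z$-positions across the two factorizations. The work splits neatly between the two earlier results: Lemma~\ref{facteurInterne} handles the generic configuration anchored by a shared earlier $Z$-position, while the prefix-code property of $X^*Z$ given by Corollary~\ref{X-pref} disposes of the degenerate case in which no such earlier common $Z$-position exists.
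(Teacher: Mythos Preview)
Your argument is correct and follows the paper's strategy exactly: same construction of $z$ and $Z$, same appeal to $Z\cap F(X^*)=\emptyset$, and the same supporting results (Lemma~\ref{facteurInterne} and Corollary~\ref{X-pref}) to show $X\cup Z$ is a code. The only difference is in how you verify the code property.

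The paper's verification is shorter. Since $Z\cap F(X^*)=\emptyset$, any equation over $X\cup Z$ involving at least one $Z$-letter can be written as
\[
x_1z_1x_2\cdots z_{n-1}x_n=x'_1z'_1x'_2\cdots z'_{p-1}x'_p,\qquad x_i,x'_j\in X^*,\ z_i,z'_j\in Z.
\]
Corollary~\ref{X-pref} (together with $|z_1|=|z'_1|$) gives $x_1=x'_1$ and $z_1=z'_1$ immediately; one then cancels and iterates. This single left-to-right induction replaces your two-case minimal-position analysis, and in particular never needs to invoke Lemma~\ref{facteurInterne} directly (Corollary~\ref{X-pref} already encapsulates it). Your position-based argument is a perfectly valid alternative---and arguably makes the role of Lemma~\ref{facteurInterne} more visible---but the iterative version is the more economical packaging of the same idea.
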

\begin{proof}
Let $X$ be  a $\theta$-invariant code. According to Theorem \ref{classic}, if $X$ is thin and complete, then it is a maximal code,
therefore $X$ is maximal in the family of $\theta$-invariant codes. 
For proving the converse, we consider a set $X$ which is maximal in the family of 
$\theta$-invariant codes.\\
Assume that $X$ is not complete and let $y\not\in F(X^*)$.
Define the word $z$ as in (\ref{Eq2}) and consider the set $Z=\{\theta^i(z) | i \in \Z\}$. 
At first, we will prove that $X\cup Z$ remains a code. In view of that, we consider an arbitrary equation between the words in $X\cup Z$. Since $X$ is a code, without loss of generality, we may assume that at least one element of $Z$ has at least one occurrence in one of the two sides of this equation. As a matter of fact, with such a  condition and since $Z\cap F(X^*)=\emptyset$, two sequences of  words in $X^*$, namely $(x_i)_{1\le i\le n},(x'_j)_{1\le j\le p}$ and two sequences of words in  $Z$, namely 
$(z_i)_{1\le i\le n-1}, (z'_j)_{1\le j \le p-1}$ exist such that the equation takes the following form:
\begin{equation}\label{Eq1} x_1z_1x_2z_2\cdots x_{n-1}z_{n-1} x_n=x'_1z'_1x'_2z'_1\cdots x'_{p-1} z'_{p-1}x'_p.\end{equation}
Without loss of generality, we assume $n\ge p$. At first,  according to Corollary  \ref{X-pref}, necessarily, we have $x_1=x'_1$, therefore Equation (\ref{Eq1}) is equivalent to:
$z_1x_2z_2\cdots x_{n-1}z_{n-1} x_n=z'_1x'_2z'_2\cdots x'_{p-1} z'_{p-1}x'_p,$
however, since all the words in $Z$ have a common length, we have $z_1=z'_1$ hence our equation is equivalent to $x_2z_2\cdots x_{n-1}z_{n-1} x_n=x'_2z'_2\cdots x'_{p-1} z'_{p-1}x'_p.$
Consequently, by  applying iteratively  the result of Corollary \ref{X-pref}, we obtain:
 $x_2=x'_2,\cdots, x_p=x'_p$,
which implies $x_{p+1}z_{p+1}\cdots z_{n-1}x_n=\varepsilon$, thus $n=p$.  In other words Equation (\ref{Eq1}) is trivial, thus $X\cup Z$  is a code.\\
Next, since $\theta$ is one-to-one and since we have  $\theta(X\cup Z)\subseteq \theta(X)\cup\theta(Z)=X\cup Z$, the code $X\cup Z$ is $\theta$-invariant. It follows from $z\in Z\setminus X$ that  $X$ is strictly included in $X\cup Z$: this contradicts the maximality of $X$ in the whole family of $\theta$-invariant codes, and completes the proof of Theorem \ref{equivalences}.\cqfd
\end{proof}
\begin{example}
Let $A=\{a,b,c\}$. Consider the antimorphism $\theta$ which is generated by the permutation $\sigma(a)=b,\sigma(b)=c,\sigma(c)=a$ and let
 $X=\{ab,cb, ca, ba, bc, ac\}$; it can be easily verified that $X$ is a $\theta$-invariant code.
Since we have $a^3\not\in F(X^*)$, by setting  $y=a^3b$ and $z=b^4\cdot a^3b\cdot a^4$ we are in Condition (\ref{Eq2}).
The corresponding set $Z$ is $\{\theta^i(z)|i\in\Z\}$
 $=\{b^4cb^3c^4, a^4c^3ac^4, a^4ba^3b^4, c^4b^3cb^4, c^4ac^3a^4, b^4a^3ba^4\}$.
Since $X\cup Z$ is a prefix set, this guarantees that $X\cup Z$ remains a $\theta$-invariant code.
\end{example}
\section{Maximality in Some Families of $\theta$-Invariant Codes}
\label{Max-family}
In the literature,  statements similar to Theorem \ref{classic} were established in the framework of  some  special families of thin codes. In this section we will draw similar investigations with regards to $\theta$-invariant codes.
We will establish the following result:
\begin{theorem}
\label{equivalences-family}
Let $A$ be a finite or countable alphabet and let $X\subseteq A^*$ be a thin $\theta$-invariant prefix (resp. bifix, f.d.d., two-way f.d.d, uniformly synchronized, circular) code.
Then the following conditions are equivalent:\\
(i) $X$ is complete\\
(ii) $X$ is a maximal code\\
(iii)  $X$ is maximal in the family of prefix (bifix, f.d.d., two-way f.d.d, uniformly synchronized, circular) codes\\
(iv) $X$ is maximal in the family  $\theta$-invariant codes\\
(v) $X$ is maximal in the family of $\theta$ invariant prefix (bifix, f.d.d., two-way f.d.d, uniformly synchronized, circular) codes.
\end{theorem}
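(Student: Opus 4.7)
The plan is to organize the implications around the chain
(i)~$\Rightarrow$~(ii)~$\Rightarrow$~(iii), (iv), (v)~$\Rightarrow$~(i).
Implication (i)~$\Rightarrow$~(ii) is Theorem~\ref{classic}, and each of
(ii)~$\Rightarrow$~(iii), (ii)~$\Rightarrow$~(iv), (ii)~$\Rightarrow$~(v) is immediate,
since each family listed in (iii), (iv), (v) is a subfamily of the family of codes to which
$X$ belongs. On the return path, (iii)~$\Rightarrow$~(i) is the classical Sch\"utzenberger-type
extension cited just before Theorem~\ref{equivalences-family} for each specific family,
while (iv)~$\Rightarrow$~(i) is Theorem~\ref{equivalences}. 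Hence the substantive work
reduces to proving (v)~$\Rightarrow$~(i).

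To prove (v)~$\Rightarrow$~(i), I argue by contradiction: suppose $X$ is maximal in the
$\theta$-invariant version of the subfamily at hand yet is not complete. I then aim to
construct a $\theta$-invariant set $Z$, disjoint from $X$, such that $X \cup Z$ still
belongs to the same subfamily, contradicting maximality. When the subfamily is one of
prefix, bifix, f.d.d., or two-way f.d.d., I re-use the construction of
Theorem~\ref{equivalences}: take $y \notin F(X^*)$ with $|y|$ large and distinct initial
and terminal letters, set $z = \overline{a}^{|y|}\, y\, a^{|y|}$, and let
$Z = \{\theta^i(z) : i \in \Z\}$. All words in $Z$ have the common length $|z|$, so no
proper prefix or suffix relation appears within $Z$; combined with
Lemma~\ref{facteurInterne} and Corollary~\ref{X-pref}, this rules out any prefix or
suffix coincidence between a word of $Z$ and a word of $X$, so $X \cup Z$ remains prefix
or bifix as required. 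In the f.d.d.\ and two-way f.d.d.\ cases, the heavy padding in $z$
localizes every potential deciphering ambiguity to the boundaries of a $Z$-occurrence, so
a deciphering delay for $X \cup Z$ can be bounded by that of $X$ plus a constant depending
on $|z|$.

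The main obstacles lie in the circular and uniformly synchronized cases. For circular
codes, the set $Z$ above need not suffice, since a conjugate of some $\theta^i(z)$ could
accidentally match a factorization in $X^*$ or induce a non-trivial cyclic equation inside
$Z$ itself. Here I would strengthen the choice of $z$ by using thinness of $X$ to pick
$y$ so long that every conjugate of every $\theta^i(z)$ avoids $F(X^*)$ and is primitive
with a distinguishing padding signature; an enhanced variant of Lemma~\ref{overlapZ}
applied to conjugates will then rule out every non-trivial cyclic equation in $X \cup Z$.

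For the uniformly synchronized case, a genuinely different methodology is required, and
this is the step I expect to be hardest. Uniform synchronization is a global condition on
$k$-tuples of code words, and adjoining a large orbit set of the previous form tends to
destroy synchronization. Instead, I would exhibit a carefully chosen $\theta$-invariant
finite set $W$ whose elements act as strong ``synchronizers'': any occurrence of a word
of $W$ inside a concatenation in $(X \cup W)^*$ should force a factorization boundary on
both sides. Using thinness of $X$, one can find $w \notin F(X^*)$ of length far exceeding
every relevant synchronization threshold, and then build $W$ from the orbit
$\{\theta^i(w) : i \in \Z\}$ with sufficient padding. The main technical difficulty will
be exhibiting a uniform synchronization delay for $X \cup W$ that controls synchronization
inside arbitrary concatenations mixing words of $X^*$ and of $W$.
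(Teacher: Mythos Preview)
Your logical skeleton is fine; the real issue is that your concrete constructions for (v)~$\Rightarrow$~(i) diverge from the paper's in ways that matter.

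\textbf{Prefix/bifix.} Re-using the set $Z=\{\theta^i(z):i\in\Z\}$ from Theorem~\ref{equivalences} does \emph{not} give a prefix code $X\cup Z$. Corollary~\ref{X-pref} only says $X^*Z$ is prefix; it does not prevent some $x\in X$ from being a proper prefix of some $\theta^i(z)$. Concretely, with $A=\{a,b\}$, $X=\{b^2\}$ (a thin $\theta_0$-invariant prefix code), $y=ab\notin F(X^*)$, one gets $z=b^2aba^2$, and $b^2\in X$ is a prefix of $z$; so $X\cup Z$ is not prefix. The paper avoids this by a different construction: since a thin non-complete prefix code is not maximal prefix, pick $y$ with $X\cup\{y\}$ prefix (and, when $\theta$ is an antimorphism, $y'$ with $X\cup\{y'\}$ suffix, then use $yy'$), and take the orbit $Y=\{\theta^i(yy'):i\in\Z\}$. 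Because $\theta$ is a literal bijection, the conditions $yy'\notin P(X)$ and $X\cap P(yy')=\emptyset$ transport along each $\theta^i$, so $X\cup Y$ stays prefix. The bifix and (two-way) f.d.d.\ cases are handled by the same ``take a single extending word, then orbit'' idea, invoking the corresponding classical non-maximality results; your bounded-delay sketch for $X\cup Z$ would need a genuine argument and is unnecessary.

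\textbf{Circular.} Here you have the situation reversed. The set $Z$ \emph{does} suffice, with no strengthening: the paper shows directly that $X\cup Z$ is circular using Lemma~\ref{facteurInterne}. In any putative nontrivial cyclic equation, an occurrence of some $y_i\in Z$ forces $uy_iv\in y'_hX^*y'_k$ with $y'_h,y'_k\in Z$, and Lemma~\ref{facteurInterne} yields $u=\varepsilon$ or $v=\varepsilon$, collapsing the equation. Your proposed enhancement (primitivity of conjugates, padding signatures) is not needed.

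\textbf{Uniformly synchronized.} The paper bypasses any ad hoc construction entirely. By \cite[Theorem 3.10]{Br98}, a non-complete uniformly synchronized code $X$ with verbal delay $k$ embeds into a complete one $X'$ whose free hull is $M=(X^{2k}A^*\cap A^*X^{2k})\cup X^*$. Since $X$ is $\theta$-invariant, so is $M$, hence (Proposition~\ref{base}) so is $X'$; thus $X\subsetneq X'$ contradicts (v). Your ``synchronizer orbit'' plan is far harder than this and, as you note yourself, has no clear way to bound the delay of the enlarged code.
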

\skpr
According to to Theorem \ref{equivalences}, and thanks to \cite[Proposition 3.3.8]{BPR10}, \cite[Proposition 6.2.1]{BPR10}, \cite[Theorem 5.2.2]{BPR10},  \cite[Proposition 3.6]{Br98} and  \cite[Theorem 3.5]{N08}, if $X$ is complete then it is maximal in the family  of $\theta$-invariant codes and maximal in the family of $\theta$-invariant prefix (bifix, f.d.d., two-way f.d.d, uniformly synchronized, circular) codes.
Consequently, the proof of Proposition \ref{equivalences-family} comes down to 
establish that if $X$ is not complete, then it cannot be maximal in the family of $\theta$-invariant prefix (bifix, f.d.d., wo-way f.d.d, uniformly synchronized, circular) codes.

1) We begin by $\theta$-invariant prefix codes.
At first, we assume that $\theta$ is an antimorphism.
Since $X\cap XA^+=\emptyset$, and since $\theta$ is injective, we have $\theta(X)\cap\theta(XA^+)=\emptyset$, thus $X\cap A^+X=\emptyset$, hence $X$ is also a suffix code.
Assume that  $X$ is   not complete. According to \cite[Proposition 3.3.8]{BPR10},  it is non-maximal in both the families of prefix codes and suffix codes. Therefore a pair of words $y,y'\in A^+\setminus X$ exists such $X\cup\{y\}$ ($X\cup\{y'\}$) remains a prefix (suffix) code. 
By construction $X\cup\{yy'\}$ remains a  code which is both prefix and suffix.\\
Set $Y=\{\theta^i(yy')| i\in \Z\}$: since all the words in $Y$ have same positive length, $Y$ is a prefix code.
From the fact that $\theta$ is one-to-one, for any integer $i\in \Z$ we obtain
$\theta^i(\{yy'\})\cap \theta^i(P(X))=\theta^i(X)\cap P( \theta^i(yy'))=\emptyset$, consequently
 $X\cup Y$ remains a prefix code. By construction, $Y$ is $\theta$-invariant and it is not included in $X$, thus $X$ is not a maximal prefix code.\\
In the case where $\theta$ is a morphism, the preceding arguments may be simplified. Actually, a word $y\in A^+\setminus X$ exists such that $X\cup \{y\}$ remains a prefix code, thereferore by setting $Y=\{\theta^i(y)| i\in \Z\}$, $X\cup Y$ remains a prefix code.

2) (sketch) The preceding arguments may be applied for proving  that in any case, if $X$ is a non-complete bifix code, then it is maximal.

3,4) (sketch) In the case where $X$ is a (two-way)  f.d.d.-code, according to \cite[Proposition 5.2.1]{BPR10}, similar arguments leads to a similar conclusion.

5) In the case where $X$ is a $\theta$-invariant uniformly synchronized code with {\it verbal delay} $k$ (\cite[Section 10.2]{BPR10}), we must make use of different arguments. Actually, according to \cite[Theorem 3.10]{Br98}, a complete uniformly synchronized code $X'$ exists, with synchronizing delay $k$,  and such that $X\subsetneq X'$.
More precisely, $X'$ is the minimal generating set of the submonoid $M$ of $A^*$ which is defined by
 $M=(X^{2k}A^*\cap A^* X^{2k})\cup X^*$.
According to Proposition \ref{base} in the present paper, $X'$ is $\theta$-invariant.  Since $X$ is stictly included in $X'$, it cannot be maximal in the family of $\theta$-invariant uniformly synchronized codes with delay $k$. 

6) It remains to study the case where $X$ is a non-complete $\theta$-invariant circular code. Let $y\not\in F(X^*)$ and let $z$ and  $Z$ be computed as in Section \ref{Max}: this  guarantees that $X\cup Z$ is a  $\theta$-invariant set.
For proving that $X\cup Z$ is  a circular code,
by contradiction  we assume that some words $y_1, \cdots y_n, y'_1,\cdots, y'_m\in X\cup Z$ (with $m+n$ minimal), $p\in A^*$ and $s\in A^+$, exist  such that the following equation holds:
\begin{eqnarray}
\label{circulaire}
y_1y_2\cdots y_n=sy'_2y'_3\cdots y'_mp~~~{\rm and}~~~y'_1=ps.
\end{eqnarray}
Once more since $X$ is a code, and since $Z\cap F(X^*)=\emptyset$, without loss of generality we assume that at least one integer $i\in\Z$ exists such that $y_i\in Z$; similarly, at least one integer $j\in [1,m]$ exists such that $y'_j\in Z$.
By construction, we have $y_i\in F(y'_j \cdots y'_my'_1\cdots y'_j \cdots y'_my'_1 \cdots y'_j)$; consequently, since all the words in $Z$ have the same length, a pair of integers $h,k\in [1,m]$  and
 a pair of words $u,v$ exist such that 
$uy_iv\in y'_hX^*y'_k$. According to Lemma \ref{facteurInterne}, necessarily we have
either $u=\varepsilon$ or $v=\varepsilon$;  this implies $y_i=y'_h$ or $y_i=y'_k$, which
contradicts the minimality of $m+n$, therefore $X\cup Z$ is a circular code.
\cqfd
\section{Embedding a Regular Invariant Code into a Complete One}
In this section, we consider a non-complete regular $\theta$-invariant code $X$ and we are interested in the problem of computing a  complete one, namely $Y$, such that $X\subseteq Y$. 
Historically, such a question appears for the first time in \cite{R75}, where the author asked for the possibility of embedding a finite code into a regular complete one. 
With regards to $\theta$-invariant codes, it seems natural to generalize the formula from
\cite{ER85} by making use of the code $Z$ that was introduced in Section \ref{Maxcompl}.
More precisely we would consider the set $X'=X\cup (ZU)^*Z$, with $U=A^*\setminus (X^*\cup A^*ZA^*)$. Unfortunately, with such a construction we observe that some pairs of words in $Z$ may overlap, therefore a non-trivial equation could hold among the words of $X'$.

Nevertheless, we shall see that in the very special case where $\theta$ is an involutive antimorphism, convenient invariant overlapping-free words can be computed. Denote by $\theta_0$   the antimorphism which is generated by the identity onto $A$; in other words, with every word $w=w_1\cdots w_n\in A^*$ (with $w_i\in A$, for $1\leq i\le n$), it associates  $\theta_0(w)=w_n\cdots w_1$.

\begin{proposition}
\label{completion}
Let $A$ be a finite alphabet and let $\theta$ be an antimorphism onto $A^*$, with $\theta\neq\theta_0$.
If $\theta$ is involutive, then any non-complete regular  $\theta$-invariant code can be embedded into a complete one.
\end{proposition}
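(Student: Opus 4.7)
The plan is to adapt the Ehrenfeucht--Rozenberg-type formula $X'=X\cup(ZU)^*Z$ recalled just before the statement, by replacing the orbit $Z$ (whose elements may mutually overlap) by a singleton $\{y\}$, where $y$ is a $\theta$-fixed overlapping-free word with $y\notin F(X^*)$. Once such a $y$ has been produced, the orbit obstruction disappears and the classical completion argument goes through essentially unchanged.

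The construction of $y$ is where the hypothesis $\theta\neq\theta_0$ is used. Since $\theta$ is involutive, its underlying letter-permutation $\sigma$ is an involution of $A$, and $\theta\neq\theta_0$ forces $\sigma\neq\mathrm{id}_A$; hence there exist distinct letters $a,\bar a\in A$ with $\sigma(a)=\bar a$ and $\sigma(\bar a)=a$. Since $X$ is not complete there is some $u\in A^*\setminus F(X^*)$, and after prepending and appending one letter we may further assume that $u$ does not start with $\bar a$ nor end with $a$ (this preserves $u\notin F(X^*)$ since the complement of $F(X^*)$ is closed under taking super-words). For $N\ge 2$ define
\[
y\;=\;a^{N}\,\bar a\;u\,\theta(u)\;a\,\bar a^{\,N}.
\]
Because $\theta$ is an involutive antimorphism one has $\theta(u\,\theta(u))=u\,\theta(u)$ and $\theta(a^N\bar a)=a\,\bar a^{\,N}$, so $\theta(y)=y$; and $u\in F(y)$ gives $y\notin F(X^*)$. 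Overlapping-freeness is checked by a case analysis on the length $k$ of a supposed border of $y$: for $1\le k\le N+1$ the prefix and the suffix of length $k$ already disagree on their first or second letter, because $a\neq\bar a$; and for $k\ge N+2$ the first letter of the suffix is either $\sigma(u_i)$ for some $i$, or one of the interface letters $a,\bar a$, and in each subcase the constraints on the first and last letters of $u$ together with $a\neq\bar a$ force a mismatch with the prefix.

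Having $y$, set
\[
U\;=\;A^*\setminus(X^*\cup A^*yA^*),\qquad Y\;=\;X\cup(yU)^*\,y.
\]
Then $X\subseteq Y$. The set $Y$ is $\theta$-invariant because $\theta(X)=X$, $\theta(y)=y$, $\theta(X^*)=X^*$, and $\theta(A^*yA^*)=A^*yA^*$ (as $\theta$ is a literal bijection of $A^*$ fixing $y$), so $\theta(U)=U$. That $Y$ is a code and is complete is the classical Ehrenfeucht--Rozenberg verification: in any two factorisations over $Y$ of a common word, the occurrences of $y$ on both sides must coincide, because $y$ is overlapping-free and $y\notin F(X^*)$; what sits strictly between two consecutive occurrences of $y$ belongs to $X^*\cup U$, which forces unique factorisation; and completeness follows from the fact that every $w\in A^*$ satisfies $ywy\in Y^*$, obtained by iteratively decomposing $w$ along its $y$-occurrences.

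The main obstacle is the construction of $y$. The exclusion of $\theta_0$ is not cosmetic: when $\theta=\theta_0$, the $\theta$-invariant words are ordinary palindromes, and any palindrome of length $\ge 2$ has first and last letters equal, hence a border of length one, so no overlapping-free $\theta$-invariant word of length $\ge 2$ exists and the present strategy is completely blocked. The asymmetry $\sigma(a)\neq a$ supplied by the hypothesis is precisely what breaks this obstruction and makes the construction succeed.
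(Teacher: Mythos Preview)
Your overall strategy is exactly the paper's: build a single $\theta$-fixed overlapping-free word $y\notin F(X^*)$ and then feed it into the Ehrenfeucht--Rozenberg completion. The reduction to this task and the final paragraph (invariance of $U$, code property, completeness) are fine.

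The gap is in the overlapping-freeness claim for your word $y=a^{N}\bar a\,u\,\theta(u)\,a\,\bar a^{\,N}$. Your case analysis for $k\ge N+2$ asserts that the constraints on the first and last letters of $u$ ``force a mismatch'', but they do not: nothing prevents an interior letter of $\theta(u)$ from being $a$, and then the whole prefix/suffix pair can coincide. Concretely, take $A=\{a,\bar a\}$ with $\sigma(a)=\bar a$, $N=2$, and $u=a\bar a\bar a\bar a$ (which satisfies $u_1\neq\bar a$, $u_{|u|}\neq a$). Then $\theta(u)=aaa\bar a$ and
\[
y=a\,a\,\bar a\,a\,\bar a\,\bar a\,\bar a\,a\,a\,a\,\bar a\,a\,\bar a\,\bar a,
\]
whose prefix and suffix of length $6$ are both $aa\bar a a\bar a\bar a$; so $y$ is not overlapping-free. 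Thus ``$N\ge 2$'' is not enough.

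There are two easy repairs. One is to take $N$ large, e.g.\ $N>2|u|+1$; then any proper border of length $\ge N+1$ is a $\theta$-palindrome (because $\theta(y)=y$), hence ends with $a\bar a^{\,N}$, forcing a block $\bar a^{\,N}$ inside $y$ at a position strictly before the final one, which is impossible since the middle segment has only $2|u|+1<N$ letters. The other is what the paper actually does: first replace $u$ by an \emph{overlapping-free} word $y'$ (this is classical), arrange $y'_1=y'_2=:c$, choose $a,b$ with $\theta(a)=b$, $a\neq b$, $b\neq c$, and set $t=a^{|y'|}b\,\theta(y')\,y'\,a\,b^{|y'|}$; the six-case analysis then uses overlapping-freeness of $y'$ (and the extra condition $b\neq c$) in an essential way. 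Either route closes the gap; your present case split does not.
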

%
\begin{proof}
Let $X$ be such that $\theta(X)=X$. Assume that $X$ is not complete.
 We will construct an overlapping-free word $t\notin F(X^*)$ such that $\theta(t)=t$.
At first, we consider a word $x$ such that $x\not\in F(X^*)$ and $|x| \ge 2$. 
Without loss of generality, we assume that $x$ is overlapping-free (otherwise,  as in  \cite[Proposition 1.3.6]{BPR10}, a word $s$ exists such that $xs$ is overlapping-free).
If $\theta(x)=x$, then we set $t=x$,
otherwise let $y=cx$, where $c$ stands for the initial letter of $x$.
Once more, without loss of generality we assume that $y$ is overlapping-free.
By construction we have $y\in ccA^+$, thus $|y|\ge3$ and $y_1=y_2=c$. 
If $\theta(y)=y$, then we set $t=y$. Now assume $\theta(y)\neq y$; according to the condition of Proposition \ref{completion}, we have $\theta|_A \neq id_A$,  therefore a pair of  letters $a,b$ exists such that the following property holds:
\begin{eqnarray}
\label{Condition-abc}
a\neq b,~~~b\neq c,~~~\theta(a)=b,~~~\theta(b)=a.
\end{eqnarray}
Set   $t=a^{|y|}b\theta(y) y a b^{|y|}$.  By construction, we have $\theta(t)=t$, moreover the following property holds:
{\bf \begin{claim}
\label{no-overlap}
{\rm$t$ is an overlapping-free word.}
\end{claim}}
\begin{proof}
Let $u,v\in A^*$ such that $ut=tv$, with $1 \leq |u|\le |t|-1$. 
According to the length of $u$, exactly one of the following cases occurs:\\
{\it Case 1:} $1\le |u|\le |y|$. With this condition, we obtain 
$t_{|y|+1}=b=(ut)_{|y|+1}=a$: a contradiction with $a\neq b$.\\
{\it Case 2:} $|y|+1\le |u|\le 2|y|$. This condition implies $\theta(y_1)=t_{2|y|+1}=a$, therefore we obtain $c=y_1=\theta(a)=b$: a contradiction with (\ref{Condition-abc}).\\
{\it Case 3:} $|u|=2|y|+1$. We have $y=a^{|y|}$: since we have  $|y|\ge 3$, this contradicts the fact that $y$ is overlaping-free.\\
{\it Case 4:}  $|u|=2|y|+2$. We have  
$t_{2|y|+3}=y_2=c=(ut)_{2|y|+3}=a$. It follows from $y_1=y_2=c$ that $y=a^{|y|}$: once more this contradicts the fact that $y$ is overlapping-free.\\
{\it Case 5:} $2|y|+3\le |u|\le 3|y|+2$. 
By construction, we have $t_{|u|+|y|}=b=(ut)_{|uy|}=a$, a contradiction with 
(\ref{Condition-abc}).\\
{\it Case 6:} $3|y|+3\le |u|\le |t|-1=4|y|+1$.  
We obtain $t_{|u|+1}=b=(ut)_{|u|+1}=a$: once more this contradicts (\ref{Condition-abc}). \\
In any case we obtain a contradiction: this establishes the claim.
\end{proof}
Since we have $t\not\in F(X^*)$, and since $t$ is overlapping-free, the classical method from \cite{ER85} may be  applied without any modification to ensure that $X$ may embedded into a complete code, say $X'$. Recall that it computes in fact  a code $X'$ as $X\cup V$, with
$V=t(Ut)^*$ and $U=A^*\setminus (X^*\cup A^*tA^*)$.
Moreover, since $\theta(t)=t$, it is straightforward to verify that $\theta(X')=X'$.\cqfd
\end{proof}
\begin{figure}
\label{f2}
\begin{center}
\includegraphics[width=10.5cm,height=3.5cm]{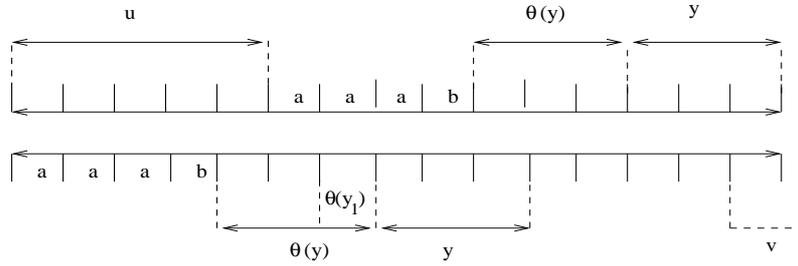}
\end{center}
\caption[]
{Proof of Proposition \ref{completion}: Case 2 with $|y|=3$ and $|u|=5$}
\end{figure}
With regards to the antimorphism $\theta_0$, 
necessarily the words $w$, $\theta_0(w)$ overlap, therefore the preceding methodology seems to be unreliable  in the most general case. We finish our paper by stating the following open problem:
\begin{flushleft}
{\it Problem.}
Let $A$ be a finite alphabet and let $\theta$ be an (anti)morphism onto $A^*$. Given a non-complete regular $\theta$-invariant code $X\subset A^*$,
can we compute  a complete  regular $\theta$-invariant code $Y$ such that $X\subseteq Y$?
\end{flushleft}

\bibliography{mysmallbib}{}

\begin{thebibliography}{10}

\bibitem{ER85}
Ehrenfeucht A. and Rozenberg S.
\newblock Each regular code is included in a regular maximal one.
\newblock {\em Theor. Inform. Appl.}, 20:89--96, 1985.

\bibitem{R75}
Restivo A.
\newblock On codes having no finite completion.
\newblock {\em Discr. Math.}, 17:309--316, 1977.

\bibitem{ARVS14}
Annal Deva Priya~Darshini C., Rajkumar~Dare V., Venkat I., and K.G.
  Subramanian.
\newblock Factors of words under an involution.
\newblock {\em J. of Math. and Inf.}, 1:52--59, 2013 -2014.

\bibitem{dD06}
de~Luca~A. and De~Luca A.
\newblock Pseudopalindrome closure operators in free monoids.
\newblock {\em Theoret. Comp. Sci.}, 362:282--300, 2006.

\bibitem{CCKS11}
Czeizler E., Czeizler E., Kari L., and S.~Seki.
\newblock An extension of the {Lyndon-Sch\"utzenberge}r result to
  pseudoperiodic words.
\newblock {\em Inf. Comput.}, 209:717--730, 2011.

\bibitem{MMNS14}
Manea F., Muller M., Nowotka D., and S.~Seki.
\newblock Generalised {Lyndon-Sch\"utzenberger} equations.
\newblock In E.~{Csuhaj-Var\'u}, M.~Dietzfelbinger, and Z.~{\'Esik}, editors,
  {\em Acts of Mathematical Foundations of Computer Science 2014 (MFCS 2014)},
  volume 8634, pages 402--413. Lect. Notes in Comp. Sci., 2014.

\bibitem{MMN11}
Manea F., Mercas R., and D.~Nowotka.
\newblock Fine and {Wilf'} theorem and pseudo-repetitions.
\newblock In B.~Rovan, V.~Sassone, and P.~Widmayer, editors, {\em Acts of
  Mathematical Foundations of Computer Science 2012 (MFCS 2012)}, volume 7464,
  pages 668--680. Lect. Notes in Comp. Sci., 2012.

\bibitem{GMRMNT13}
Gawrychowski G., Manea F., Mercas R., Nowotka D., and C.~Tiseanu.
\newblock {Finding Pseudo-repetitions}.
\newblock In Natacha Portier and Thomas Wilke, editors, {\em 30th International
  Symposium on Theoretical Aspects of Computer Science (STACS 2013)}, volume~20
  of {\em Leibniz International Proceedings in Informatics (LIPIcs)}, pages
  257--268, Dagstuhl, Germany, 2013. Schloss Dagstuhl--Leibniz-Zentrum fuer
  Informatik.

\bibitem{BPR10}
Berstel J., Perrin D., and C.~Reutenauer.
\newblock {\em Codes and Automata}.
\newblock Cambridge University Press, 2010.

\bibitem{N08}
N\'eraud J.
\newblock Completing circular codes in regular submonoids.
\newblock {\em Theoret. Comp. Sci.}, 391:90--98, 2008.

\bibitem{KM06}
Kari L. and Mahalingam K.
\newblock Dna codes and their properties.
\newblock In C.~Mao and T.~Yokomori, editors, {\em 12th International Meeting
  on DNA Computing (DNA12), Revised Selected Papers}, volume 4287, pages
  127--142. Lect. Notes in Comp. Sci., june 2006.

\bibitem{KM08}
Kari L. and Mahalingam K.
\newblock {Watson-Crick} conjugate and commutative words.
\newblock In M.H. Garzon and H.~Yan, editors, {\em 13th International Meeting
  on DNA Computing (DNA13)}, volume 4848, pages 273--283. Lect. Notes in Comp.
  Sci., 2008.

\bibitem{BdZD11}
Bucci M., de~Luca~A., De~Luca A., and L.~Q. Zamboni.
\newblock On $\theta$-episturmian words.
\newblock {\em Eur. J. of Comb.}, 30:473--479, 2009.

\bibitem{Lo83}
Lothaire M.
\newblock {\em Combinatorics on Words (2nd edition in Cambridge University
  Press 1997)}.
\newblock Cambridge University Press, (2nd edition Cambridge University Press
  1997), 1983.

\bibitem{Br98}
Bruy\`ere V.
\newblock On maximal codes with bounded synchronization delay.
\newblock {\em Theoret. Comp. Sci.}, 204:11--28, 1998.

\end{thebibliography}
\bibliographystyle{plain}

\end{document}